\newtheorem{theorem}{Theorem}           
\newtheorem{corollary}[theorem]{Corollary}
\theoremstyle{definition}
\newtheorem{remark}[theorem]{Remark}
\numberwithin{equation}{section}       
\newcommand{\tr}{\operatorname{tr}}
\newcommand{\al}{\alpha}
\newcommand{\si}{\sigma}
\newcommand{\la}{\lambda}
\newcommand{\De}{\Delta}
\newcommand{\ep}{\varepsilon}
\renewcommand{\Psi}{\overline{\Phi}}
\newcommand{\R}{\mathbb{R}}
\newcommand{\J}{\mathcal J}
\renewcommand{\H}{\mathcal H}
\newcommand{\Ts}{\mathrm{Ts}}
\newcommand{\RR}{\mathrm{R}}
\newcommand{\down}{{\downarrow}}
\newcommand{\up}{{\uparrow}}
\newcommand{\tp}{{\tilde{p}}}
\newcommand{\tq}{{\tilde{q}}}
\newcommand{\tf}{{\tilde{f}}}
\newcommand{\tep}{{\tilde{\ep}}}
\newcommand{\trho}{{\tilde{\rho}}}
\newcommand{\tsi}{{\tilde{\si}}}
\renewcommand{\le}{\leqslant}
\renewcommand{\ge}{\geqslant}
\begin{document}

\title[Modulus of continuity of the quantum $f$-entropy with respect to the trace distance]{Modulus of continuity of the quantum $f$-entropy with respect to the trace distance}


\author{Iosif Pinelis}

\address{Department of Mathematical Sciences\\
Michigan Technological University\\
Houghton, Michigan 49931, USA\\
\email{ipinelis@mtu.edu}}

%

\CorrespondingAuthor{Iosif Pinelis}


\date{\today}                               

\keywords{modulus of continuity; density matrices; trace distance; von Neumann entropy; quantum $f$-entropy; convex functions; Schur majorization}

\subjclass{Primary 15A42, 26B25, 26D15, 47A30, 47A63, 47B06, 94A17; secondary 47A60, 47B15, 47B65, 47A70}



\begin{abstract}
A well-known result due to Fannes is a certain upper bound on the modulus of continuity of the von Neumann entropy with respect to the trace distance between density matrices; this distance is the maximum probability of distinguishing between the corresponding quantum states. Much more recently, Audenaert obtained an exact expression of this modulus of continuity. 

In the present note, Audenaert's result is extended to a broad class of entropy functions indexed by arbitrary continuous convex functions $f$ in place of the Shannon--von Neumann function $x\mapsto x\log_2x$. The proof is based on the Schur majorization. 
\end{abstract}

\maketitle



\section{Summary and discussion}
Let $\rho$ and $\si$ be two density matrices of a finite dimension $d\ge2$, that is, two positive-semidefinite Hermitian linear operators of trace $1$ acting on a $d$-dimensional Hilbert space $\H$. The trace distance between $\rho$ and $\si$ is 
\begin{equation}\label{eq:T}
	T(\rho,\si):=\frac12\,\tr|\rho-\si|
	=\frac12\,\sum_{i=1}^d|\la_i(\rho-\si)|=\sum_{i=1}^d\max\big(0,\la_i(\rho-\si)\big).  
\end{equation}
Here and in what follows, as usual, 
$\tr$ denotes the trace, $\la_1(\tau)\ge\cdots\ge\la_d(\tau)$ are the eigenvalues of a Hermitian linear operator $\tau$ on $\H$, and, for any real-valued function $f$ defined on the spectrum of $\tau$, the action of $f$ on $\tau$ is defined by the formula 
\begin{equation*}
	f(\tau):=\sum_{i=1}^d f(\la_i(\tau))P_i(\tau), 
\end{equation*}
given the spectral decomposition $\tau=\sum_{i=1}^d \la_i(\tau)P_i(\tau)$ of $\tau$, with appropriate orthogonal projectors $P_i(\tau)$, so that $\sum_{i=1}^d P_i(\tau)$ is the identity operator on $\H$; in particular, $|\tau|=\sum_{i=1}^d |\la_i(\tau)|P_i(\tau)$.  

One may also write 
\begin{equation}\label{eq:T=sup}
	T(\rho,\si)=\sup_P\tr\big(P(\rho-\si)\big),  
\end{equation}
where the supremum is taken over all orthogonal projectors $P$ of the Hilbert space $\H$; cf.\ e.g.\ \cite[Ch.\ 9]{nielsen-chuang}. Thus, in view of Gleason's theorem \cite{gleason}, which states that any natural assignment of probabilities to measurement outcomes must follow the Born rule, one sees that the trace distance $T(\rho,\si)$ is the maximum probability of distinguishing between the two quantum states given by $\rho$ and $\si$. 

By the norm inequality or by (\ref{eq:T=sup}), $0\le T(\rho,\si)\le1$. 
Moreover, it is easy to see that $T(\rho,\si)$ can take any value in the interval $[0,1]$. Indeed, 
suppose e.g.\ that the operators $\rho$ and $\si$ are commuting, with the same eigenbasis and the corresponding eigenvalues $p_j$ and $q_j$ given by the formulas 
$p_1:=t$, $p_2=1-t$, $q_1:=1-t$, $q_2=t$, and $p_3=q_3=\dots=p_d=q_d=0$. Then, letting $t$ vary from $0$ to $1/2$, 
we see that $T(\rho,\si)=\frac12\,\sum_{k=1}^d|p_k-q_k|$ will continuously vary from $1$ to $0$. 

The von Neumann entropy of a density matrix $\rho$ is 
\begin{equation*}
	S(\rho):=-\tr(\rho\log_2\rho), 
\end{equation*}
with $0\log_20:=0$. 
Audenaert \cite{audenaert} showed that 
\begin{equation}\label{eq:aud}
	|S(\rho)-S(\si)|\le \De_d(\ep):=h(\ep)+\ep\log_2(d-1), 
\end{equation}
where 
\begin{equation}\label{eq:ep}
	\ep:=T(\rho,\si)  
\end{equation}
and 
\begin{equation*}
	h(\ep):=-\ep\log_2\ep-(1-\ep)\log_2(1-\ep). 
\end{equation*}
As pointed out in \cite{audenaert}, the upper bound $\De_d(\ep)$ on $|S(\rho)-S(\si)|$ is better (that is, smaller) than the well-known bound due to Fannes 
\cite{fannes73,nielsen-chuang}. Moreover, as was noted in \cite{audenaert}, the bound $\De_d(\ep)$ on $|S(\rho)-S(\si)|$ is the best possible one in terms of $\ep$ and $d$. 

In this note, 
inequality (\ref{eq:aud}) is extended to general continuous convex functions of density matrices instead of the convex function $p\mapsto p\log_2p$, as follows: 

\begin{theorem}\label{th:}
Take any continuous convex function $f\colon[0,1]\to\R$ and consider the corresponding generalized $f$-entropy 
\begin{equation}\label{eq:S}
	S_f(\rho):=-\tr f(\rho) 
\end{equation}
of a density matrix $\rho$. 
Then 
\begin{align}
|S_f(\rho)-S_f(\si)|&\le\De_{f;d}(\ep) \label{eq:}\\ 
&:=f(1)-f(1-\ep)-(d-1)\Big(f\Big(\frac\ep{d-1}\Big)-f(0)\Big).  \nonumber  
\end{align}
The bound $\De_{f;d}(\ep)$ on $|S_f(\rho)-S_f(\si)|$ is exact for each $\ep\in[0,1]$, as it is attained by $|S_f(\rho)-S_f(\si)|$ for some density matrices $\rho$ and $\si$ such that $T(\rho,\si)=\ep$.
\end{theorem}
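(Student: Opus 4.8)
The plan is to reduce the operator statement to a statement about eigenvalue vectors and then to a convexity/majorization argument. Writing $p_j:=\la_j(\rho)$ and $q_j:=\la_j(\si)$ for the (decreasingly ordered) eigenvalues, we have $S_f(\rho)=-\sum_j f(p_j)$ and $S_f(\si)=-\sum_j f(q_j)$, so that $S_f(\rho)-S_f(\si)=\sum_j\bigl(f(q_j)-f(p_j)\bigr)$. The quantity $\ep=T(\rho,\si)$ controls how far apart $\rho$ and $\si$ are, but not directly in terms of the ordered eigenvalues; however, the eigenvalues of $\rho-\si$ satisfy $\sum_i\max(0,\la_i(\rho-\si))=\ep$, and by Weyl-type (Ky Fan) majorization inequalities the vector $p=(p_j)$ and $q=(q_j)$ differ in a controlled way. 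The first key step is therefore to establish, using the variational characterization \eqref{eq:T=sup} together with standard eigenvalue interlacing/pinching, that for every $k$ one has $\sum_{j=1}^k p_j-\sum_{j=1}^k q_j\le\ep$ and likewise with $p,q$ interchanged; equivalently, the ``overlap defect'' is at most $\ep$ in the appropriate majorization sense.

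The second step is to observe that, for fixed $\ep$, maximizing $\bigl|\sum_j(f(q_j)-f(p_j))\bigr|$ over all admissible pairs of probability vectors is a linear-programming-type problem over a polytope whose extreme points are easy to describe, because $f$ is convex and continuous on $[0,1]$. Convexity of $f$ forces the extremizer to push mass to the endpoints: the worst case is $p=(1-\ep,\tfrac\ep{d-1},\dots,\tfrac\ep{d-1})$ versus $q=(1,0,\dots,0)$ (or the symmetric swap), for which the pair of density matrices can be taken commuting with a common eigenbasis, exactly as in the example already given in the excerpt with $d$ replaced appropriately. Plugging these vectors into $\sum_j(f(q_j)-f(p_j))$ yields precisely $f(1)-f(1-\ep)-(d-1)\bigl(f(\ep/(d-1))-f(0)\bigr)=\De_{f;d}(\ep)$, which simultaneously proves the inequality \eqref{eq:} and its sharpness, since that commuting pair has trace distance exactly $\ep$.

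Concretely, I would run the argument as follows. First reduce to the diagonal case and set up the two ordered vectors $p,q$. Next invoke the Schur majorization input: the signed vector $p-q$, when its positive part is summed, is bounded by $\ep$, and more is true — the pair $(p,q)$ lies in a transportation-type polytope determined by $\ep$ and $d$. Then, using that $f$ is convex, bound $\sum_j f(q_j)-\sum_j f(p_j)$ by transporting mass between coordinates: each unit of mass moved from level $a$ to level $b$ changes the sum by $f(b)-f(a)$, and convexity says this is maximized by moving mass as far toward the endpoints as the constraints allow; a short Abel-summation (summation by parts) argument converts the Ky Fan inequalities into the bound $\De_{f;d}(\ep)$. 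Finally, exhibit the commuting extremal pair to certify exactness.

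The main obstacle I expect is the majorization bookkeeping in the inequality direction: turning ``$T(\rho,\si)=\ep$'' into the right family of linear constraints on the ordered eigenvalue vectors $p$ and $q$, and then showing rigorously (via Abel summation plus convexity of $f$, not merely heuristically) that $\De_{f;d}(\ep)$ is the maximum of $\sum_j\bigl(f(q_j)-f(p_j)\bigr)$ over that constraint set. Once the correct polytope is identified, the convexity argument that its maximum is attained at the claimed vertex is routine, and the exactness half is immediate from the explicit construction; so the crux is the passage from the trace-distance constraint to the majorization constraints.
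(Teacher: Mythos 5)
Your overall strategy---diagonalize, pass to ordered eigenvalue vectors via Ky Fan--type inequalities, optimize $\sum_j f(p_j)-\sum_j f(q_j)$ over probability vectors by convexity, and certify sharpness with a commuting pair---is the same general route the paper takes, and your extremal pair is the correct one. But two essential steps are missing, and one of them would actually fail as you describe it. First, the reduction from the trace distance to eigenvalue constraints: you extract only the one-sided relaxation ``$\sum_{j\le k}(p_j-q_j)\le\ep$ and vice versa,'' i.e.\ essentially $\tfrac12\|p^\down-q^\down\|_1\le\ep$. Maximizing over that relaxed set can only give $\max_{\ep'\le\ep}\De_{f;d}(\ep')$, and since $\De_{f;d}$ is \emph{not} monotone in $\ep$ (it increases on $[0,1-\tfrac1d]$ and decreases on $[1-\tfrac1d,1]$), for $\ep>1-\tfrac1d$ this yields only the weaker bound $\De_{f;d}(1-\tfrac1d)>\De_{f;d}(\ep)$. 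The paper needs the \emph{two-sided} sandwich $\ep^{\down\down}\le\ep\le\ep^{\up\down}$ from \eqref{eq:ky fan} together with the unimodality of $\De_{f;d}$ to conclude $\De_{f;d}(\ep)\ge\min\big(\De_{f;d}(\ep^{\down\down}),\De_{f;d}(\ep^{\up\down})\big)$, and then proves the vector inequality at the exact value $E(p,q)$ for each of the two orderings. Without the upper half of the sandwich your argument does not deliver the stated bound on $[1-\tfrac1d,1]$, which is precisely the range where the theorem is stronger than the modulus-of-continuity statement of Corollary~\ref{cor:}.

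Second, the finite-dimensional optimization is not ``a linear-programming-type problem over a polytope whose extreme points are easy to describe.'' The objective $D_f(p,q)=\sum f(p_i)-\sum f(q_i)$ is convex in $p$ but \emph{concave} in $q$, and the constraint set $\{E(p,q)=\ep\}$ is not convex; accordingly the maximizer is extremal in $p$ (mass pushed to $(1,0,\dots,0)$) but Jensen-balanced in $q$ (mass spread equally over $d-1$ coordinates), so no vertex argument applies to $q$. Pinning this down is the bulk of the paper's proof: a Schur-majorization replacement $(p_1,\dots,p_k)\mapsto(q_1+\ep,q_2,\dots,q_k)$ that can only increase $D_f$, an induction on $d$ (via the rescaled Corollary~\ref{cor:t}) to dispose of coordinates with $p_i=q_i$, strict-convexity perturbation arguments to force the forms $P(x)$ and $Q(x)$ of \eqref{eq:PQ}, and a final monotonicity computation showing $h(x)\le h(1)$. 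You explicitly defer exactly this (``a short Abel-summation argument,'' ``the main obstacle I expect''), so the core of the inequality remains unproved. The exactness half of your proposal is fine.
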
 

The necessary proofs will be given in Section~\ref{proofs}.

In the particular case when $f(p)=p\ln p$ (with $f(0)=0$), the exact bound $\De_{f;d}(\ep)$
coincides with the Audenauert bound $\De_d(\ep)$. 

Almost immediately from Theorem~\ref{th:}, we obtain the following. 

\begin{corollary}\label{cor:}
The modulus of continuity of the generalized $f$-entropy $S_f(\cdot)$ with respect to the trace distance
is given by the formula 
\begin{align*}
	\sup_{(\rho,\si)\colon\, T(\rho,\si)\le\ep}|S_f(\rho)-S_f(\si)|
&=\max_{(\rho,\si)\colon\, T(\rho,\si)\le\ep}|S_f(\rho)-S_f(\si)| \\ 
	&=\De_{f;d}\big(\min(\ep,1-\tfrac1d)\big) 
\end{align*}
for all $\ep\in[0,1]$. 
\end{corollary}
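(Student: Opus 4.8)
The plan is to reduce the operator statement to a statement about probability vectors and then to exploit convexity and majorization. First I would observe that, since $S_f(\rho)=-\tr f(\rho)$ depends only on the spectrum of $\rho$, we may write $S_f(\rho)=-\sum_{i=1}^d f(p_i)$ and $S_f(\si)=-\sum_{i=1}^d f(q_i)$, where $p=(p_i)$ and $q=(q_i)$ are the (probability) vectors of eigenvalues of $\rho$ and $\si$. The real content is therefore to bound $|\sum_i f(q_i)-\sum_i f(p_i)|$ in terms of $\ep=T(\rho,\si)$ and $d$. The standard obstruction is that $T(\rho,\si)$ is \emph{not} determined by $p$ and $q$ alone — it depends on the relative position of the eigenbases — so one cannot work purely with the eigenvalue vectors; what one does have is the lower bound $T(\rho,\si)\ge\frac12\sum_i|p_i^\down-q_i^\down|$ (comparing sorted eigenvalues), i.e.\ the classical total variation distance between the sorted spectra is at most $\ep$. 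So it suffices to prove the inequality for commuting $\rho,\si$, i.e.\ to show: if $p,q$ are probability vectors on $d$ points with $\frac12\sum_i|p_i-q_i|\le\ep$ then $|\sum_i f(q_i)-\sum_i f(p_i)|\le\De_{f;d}(\ep)$.

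For the inequality itself I would fix the constraint $\sum_i q_i=\sum_i p_i=1$, $p,q\ge0$, and $\frac12\|p-q\|_1\le\ep$, and maximize the linear-in-$f(q)$ (or linear-in-$(f(q)-f(p))$) functional $\sum_i\big(f(q_i)-f(p_i)\big)$. The key step is a majorization/extreme-point argument: among all pairs $(p,q)$ subject to these constraints, the maximum of $\sum_i f(q_i)-\sum_i f(p_i)$ is attained at a pair where $q$ is as ``spread out'' as possible and $p$ as ``concentrated'' as possible, consistent with $\frac12\|p-q\|_1=\ep$; concretely, at $p=(1-\ep,\tfrac\ep{d-1},\dots,\tfrac\ep{d-1})$ versus $q=(1,0,\dots,0)$ — wait, that violates $\sum q_i=1$ only if $\ep<1$; the correct extremal pair is $p=(1,0,\dots,0)$ and $q=(1-\ep,\tfrac\ep{d-1},\dots,\tfrac\ep{d-1})$, for which $\sum_i f(p_i)-\sum_i f(q_i)=f(1)+(d-1)f(0)-f(1-\ep)-(d-1)f\big(\tfrac\ep{d-1}\big)=\De_{f;d}(\ep)$. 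Since $f$ is convex, $x\mapsto f(x)$ is dominated in the majorization order in exactly the way needed: moving mass from the interior toward the endpoints $0$ and $1$ increases $\sum_i(f(q_i)-f(p_i))$ when we increase $\sum f(p_i)$ and decreases $\sum f(q_i)$ (using that, for fixed total mass and support size, the Schur-convexity of $p\mapsto\sum_i f(p_i)$ forces the sum to be largest at the most majorizing vector and smallest at the uniform one). I would make this precise by a two-step transport argument: (i) given any admissible $(p,q)$, by Schur majorization replace $p$ by its ``rearranged-and-pushed-out'' version and $q$ by its ``flattened'' version, without decreasing the objective and without increasing $\frac12\|p-q\|_1$; (ii) then it is an elementary one-variable optimization over $\ep$ showing the maximum equals $\De_{f;d}(\ep)$ and is nondecreasing on $[0,1-\tfrac1d]$, which also takes care of the ``$\le\ep$'' versus ``$=\ep$'' distinction and proves the corollary's clamp at $\min(\ep,1-\tfrac1d)$.

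For exactness (attainment), I would exhibit commuting $\rho,\si$ realizing the extremal pair: take $\rho$ with eigenvalues $(1,0,\dots,0)$ and $\si$ with eigenvalues $(1-\ep,\tfrac\ep{d-1},\dots,\tfrac\ep{d-1})$ in the \emph{same} eigenbasis, so that $T(\rho,\si)=\frac12\big((1-(1-\ep))+(d-1)\tfrac\ep{d-1}\big)=\ep$ and $|S_f(\rho)-S_f(\si)|=\De_{f;d}(\ep)$ by direct computation. The main obstacle I anticipate is step~(i): carefully justifying that one may simultaneously replace $p$ and $q$ by their extremal forms \emph{while keeping $\frac12\|p-q\|_1\le\ep$} — naive Schur-convexity handles each of $\sum f(p_i)$ and $\sum f(q_i)$ in isolation, but the coupling constraint $\|p-q\|_1\le2\ep$ forces a more delicate argument, essentially showing the feasible region's relevant face is a polytope whose vertices are permutations of the claimed extremal pair; once that structural fact is in hand, convexity of $f$ finishes it since a convex function on a polytope attains its maximum at a vertex. (Here one should also check the sign: the bound is for $|S_f(\rho)-S_f(\si)|$, so one argues symmetrically, or notes $\De_{f;d}(\ep)\ge0$ by convexity and bounds both $S_f(\rho)-S_f(\si)$ and $S_f(\si)-S_f(\rho)$ the same way.)
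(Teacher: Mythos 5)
Your plan correctly identifies the extremal pair, and your attainment argument (commuting $\rho,\si$ with spectra $(1,0,\dots,0)$ and $(1-\ep,\frac{\ep}{d-1},\dots,\frac{\ep}{d-1})$) matches the paper's $P(1),Q(1)$; likewise, deriving the clamp at $\min(\ep,1-\frac1d)$ from the unimodality of $\De_{f;d}$ is exactly how the paper gets the Corollary from Theorem~\ref{th:}. The gap is in your proof of the underlying classical inequality, i.e.\ steps (i)--(ii). Your finishing move --- ``a convex function on a polytope attains its maximum at a vertex'' --- does not apply: the objective $\sum_i f(p_i)-\sum_i f(q_i)$ is a \emph{difference} of convex functions of $(p,q)$, hence neither convex nor concave on the feasible set, and indeed the maximizing $q$ is not a vertex of $S_d$ (its last $d-1$ coordinates are equal and positive). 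The asserted structural fact that the relevant face has only permutations of the extremal pair as vertices is neither proved nor true for the polytope $\{(p,q)\in S_d\times S_d:\ \frac12\|p-q\|_1\le\ep\}$. The paper instead fixes a maximizer of $D_f$ on $P_{d;\ep}$ and runs a variational argument: a sign normalization, a genuine Schur-majorization replacement of $(p_1,\dots,p_k)$ by $(q_1+\ep,q_2,\dots,q_k)$, an induction on $d$ (via Corollary~\ref{cor:t}) to dispose of coordinates with $p_i=q_i$, and strict-convexity perturbations forcing $p$ to have at most two nonzero entries and $q$ to be constant on $\{2,\dots,d\}$, after which a one-variable monotonicity computation finishes the proof. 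Some version of this work is unavoidable; nothing in your outline substitutes for it.

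A second, more local issue: you reduce to commuting states using only the lower bound $T(\rho,\si)\ge\frac12\|p^\down-q^\down\|_1$. For Theorem~\ref{th:} this is insufficient, because $\De_{f;d}$ is not monotone on all of $[0,1]$ (it decreases on $[1-\frac1d,1]$), so $\De_{f;d}(\ep^{\down\down})\le\De_{f;d}(\ep)$ can fail; the paper needs both halves of the double inequality \eqref{eq:ky fan} together with the unimodality of $\De_{f;d}$ to obtain \eqref{eq:incr-decr}. For the Corollary as stated your one-sided reduction happens to be harmless, since the right-hand side is already $\max_{0\le\ep'\le\ep}\De_{f;d}(\ep')$, but since you route the Corollary through the Theorem you should be aware that the one-sided reduction does not suffice to prove the Theorem itself.
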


The von Neumann entropy is a special case of the generalized $f$-entropy, with $f(x)=x\log_2 x$. 
Another special case of the generalized $f$-entropy is the Tsallis entropy \cite{tsallis88,gell-mann--tsallis,tsallis10} 
\begin{equation*}
	\Ts_\al(\rho):=\frac{1-\tr(\rho^\al)}{\al-1}
\end{equation*}
for real $\al>1$, corresponding to the continuous convex function $f$ given by the formula $f(x):=(x^\al-x)/(\al-1)$ for real $x\in[0,1]$. 
The special case of the Tsallis entropy with $\al=2$ is also known as the Gini--Simpson index \cite{jost} or the Gibbs--Martin/Blau index \cite{Blau18} or the expected heterozygosity \cite{sbordoni}. 
The Tsallis entropy $\Ts_\al(\rho)$ is related with the Renyi entropy 
\begin{equation*}
	\RR_\al(\rho):=\frac{\log_2\tr(\rho^\al)}{1-\al}
\end{equation*}
by the strictly increasing one-to-one transformation given by the formula 
\begin{equation*}
	\Ts_\al(\rho)=\frac{1-2^{(1-\al)\RR_\al(\rho)}}{\al-1}. 
\end{equation*}

In the non-quantum case, that is, for a ``probability distribution'' $(p_1,\dots,p_d)$ with nonnegative $p_1,\dots,p_d$ such that $p_1+\dots+p_d=1$, the notion of the $f$-entropy \break 
$-\sum_{k=1}^d f(p_k)$, again for a convex function $f$, was considered in \cite{arimoto}. 

The special case of Theorem~\ref{th:} corresponding to the Tsallis entropy was obtained in \cite{zhang07}, by a different method, involving probabilistic coupling. 


In the following corollary, obtained from Theorem~\ref{th:} mainly by simple rescaling, the unit-trace condition on $\rho$ and $\si$ is dropped. The resulting statement may be of some convenience. It will actually be used in the proof of Theorem~\ref{th:}. 

\begin{corollary}\label{cor:t}
Let $\rho$ and $\si$ be two positive-semidefinite Hermitian linear operators of the same trace $t$ acting on a Hilbert space of a finite dimension $d\ge2$. Take any continuous convex function $f\colon[0,t]\to\R$. 
Then $\ep\in[0,t]$ and 
\begin{align}	
|S_f(\rho)-S_f(\si)|&\le\De_{t;f;d}(\ep) \label{eq:t} 
:=f(t)-f(t-\ep)-(d-1)\Big(f\Big(\frac\ep{d-1}\Big)-f(0)\Big),  
\end{align}
with $S_f$, $T$, and $\ep$ still as defined in \eqref{eq:S}, \eqref{eq:T}, and \eqref{eq:ep}.  
Also, $\De_{t;f;d}(\ep)$ is nondecreasing in $t$ and in $d$. 
\end{corollary}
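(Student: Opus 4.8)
The plan is to derive Corollary~\ref{cor:t} from Theorem~\ref{th:} by a linear rescaling (only the bound \eqref{eq:} is needed, not its exactness part), and then to obtain the two monotonicity assertions by elementary convexity. If $t=0$, then $\rho=\si=0$, so $\ep=0$ and both sides of \eqref{eq:t} vanish; hence assume $t>0$. Set $\rho':=\rho/t$ and $\si':=\si/t$, which are density matrices on the same $d$-dimensional space, and define $g\colon[0,1]\to\R$ by $g(x):=f(tx)$. Since $x\mapsto tx$ is affine and $f$ is continuous and convex, $g$ is continuous and convex. Because $\la_i(\rho')=\la_i(\rho)/t$ while the spectral projectors of $\rho'$ and $\rho$ coincide, we get $g(\rho')=f(\rho)$, hence $S_g(\rho')=S_f(\rho)$, and likewise $S_g(\si')=S_f(\si)$; also $T(\rho',\si')=\tfrac{1}{2t}\tr|\rho-\si|=\ep/t$.

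Next I would note that $\ep\in[0,t]$: by the triangle inequality for the trace norm together with $\rho,\si\ge0$,
\begin{equation*}
	2\ep=\tr|\rho-\si|\le\tr|\rho|+\tr|\si|=\tr\rho+\tr\si=2t .
\end{equation*}
In particular $\ep/t\in[0,1]$, so the arguments $1-\ep/t$ and $\tfrac{\ep/t}{d-1}$ lie in $[0,1]$. Applying Theorem~\ref{th:} to $\rho'$, $\si'$, the convex function $g$, and $T(\rho',\si')=\ep/t$ gives
\begin{equation*}
	|S_f(\rho)-S_f(\si)|=|S_g(\rho')-S_g(\si')|\le g(1)-g\big(1-\tfrac{\ep}{t}\big)-(d-1)\Big(g\big(\tfrac{\ep}{t(d-1)}\big)-g(0)\Big),
\end{equation*}
and substituting $g(x)=f(tx)$ turns the right-hand side into exactly $\De_{t;f;d}(\ep)$ of \eqref{eq:t}. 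This proves the bound.

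The two monotonicity claims are statements about the real-valued function given by the displayed formula, and I would read them with $d$ ranging over reals $\ge2$ and, when $t$ varies, with $f$ fixed on a common interval $[0,t_{\max}]$ containing the relevant values of $t$. For monotonicity in $d$: the term $f(t)-f(t-\ep)$ does not involve $d$, while $(d-1)\big(f(\tfrac{\ep}{d-1})-f(0)\big)=\ep\cdot\frac{f(\ep/(d-1))-f(0)}{\ep/(d-1)}$ is $\ep$ times the slope of the chord of $f$ from $0$ to $\ep/(d-1)$; by convexity this slope is nondecreasing in its right endpoint, which decreases as $d$ increases, so this term is nonincreasing in $d$ and $\De_{t;f;d}(\ep)$ is nondecreasing in $d$. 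For monotonicity in $t$: now the last term is constant in $t$, and $f(t)-f(t-\ep)$ is $\ep$ times the slope of the chord of $f$ over the window $[t-\ep,t]$; the elementary fact that, for convex $f$ and fixed $h>0$, the forward difference $x\mapsto f(x+h)-f(x)$ is nondecreasing (an instance of the increasing-slopes lemma) then shows $f(t)-f(t-\ep)$ is nondecreasing in $t$, hence so is $\De_{t;f;d}(\ep)$.

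I do not expect a genuine obstacle here. The identities $S_g(\rho')=S_f(\rho)$ and $T(\rho',\si')=\ep/t$ are routine once the spectral bookkeeping is spelled out, and in the monotonicity step the only points needing care are keeping $f$ on a fixed, large enough domain as $t$ moves and invoking the monotonicity of forward differences of convex functions rather than differentiating, since $f$ need not be differentiable.
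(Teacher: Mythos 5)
Your proposal is correct and follows essentially the same route as the paper: reduce to Theorem~\ref{th:} by rescaling $\rho/t$, $\si/t$ with $g(x)=f(tx)$, handle $t=0$ separately, get $\ep\in[0,t]$ from the triangle inequality for the trace norm, and deduce both monotonicity claims from convexity. Your spelled-out chord-slope and forward-difference arguments simply make explicit what the paper compresses into ``follows by the convexity of $f$.''
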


\begin{remark}\label{rem:}
Just as 
the bound $\De_{f;d}(\ep)=\De_{1;f;d}(\ep)$ is exact in the setting of Theorem~\ref{th:}, the bound $\De_{t;f;d}(\ep)$ is exact in the slightly more general setting of  Corollary~\ref{cor:t}. 
However, somewhat surprisingly and in contrast with the last sentence in Corollary~\ref{cor:t}, $\De_{t;f;d}(\ep)$ is not monotonic in $\ep\in[0,t]$, for any real $t>0$, as can be easily seen from the proofs of Corollaries~\ref{cor:t} and \ref{cor:}. 
\end{remark}

\section{Proofs}\label{proofs}

First here, let us deduce Corollary~\ref{cor:t} from Theorem~\ref{th:} or, more specifically, from inequality \eqref{eq:}: 

\begin{proof}[Proof of Corollary~\ref{cor:t}]
That $\ep\in[0,t]$ follows 
by the norm inequality. If $t=0$ then $\rho=\si=0$ and hence $\ep=0$, so that inequality \eqref{eq:t} is trivial. Assume now that $t>0$. 
Let $\trho:=\rho/t$, $\tsi:=\si/t$, $\tep:=T(\trho,\tsi)=\ep/t$, and $\tf(u):=f(tu)$ for all real $u$. 
Then $\tr\trho=1=\tr\tsi$, 
$S_f(\rho)=S_\tf(\trho)$, $S_f(\si)=S_\tf(\tsi)$, and $\De_{\tf;d}(\tep)=\De_{t;f;d}(\ep)$. So, \eqref{eq:t} is obtained by using \eqref{eq:} with $\trho$, $\tsi$, $\tep$, and $\tf$ instead of $\rho$, $\si$, $\ep$, and $f$, respectively. 
The last sentence in Corollary~\ref{cor:t} follows by the convexity of $f$. 
\end{proof}

\begin{proof}[Proof of Theorem~\ref{th:}]
%


We shall prove inequality \eqref{eq:} by induction on $d$. 

By approximation, without loss of generality (wlog) the function $f$ is \emph{strictly} convex and differentiable.

The proof uses the powerful majorization tool; cf.\ e.g.\ \cite[Definition~A.1]{marsh-ol09}. 
%
For any natural $n$, we say that a vector $a=(a_1,\dots,a_n)\in\R^n$  majorizes (in the Schur sense) a vector $b=(b_1,\dots,b_n)\in\R^n$ and write $a\succcurlyeq b$ if for the corresponding decreasing rearrangements $a^\down=(a_{n:1},\dots,a_{n:n})$ and $b^\down=(b_{n:1},\dots,b_{n:n})$ of the vectors $a$ and $b$ we have $\sum_{i=1}^n a_{n:i}=\sum_{i=1}^n b_{n:i}$ and $\sum_{i=1}^j a_{n:i}\ge\sum_{i=1}^j b_{n:i}$ for all $j=1,\dots,n-1$. One may note here that 
\begin{equation}\label{eq:order}
	a_{n:j}=\max_{J\in\J_{n,j}
	}\min_{i\in J}a_i
\end{equation}
$j\in[n]:=\{1,\dots,n\}$, where $\J_{n,j}$ denotes the set of all subsets of cardinality $j$ of the set $[n]$; cf.\ e.g.\ \cite[formula (8.2)]{semi-mod-publ}. 

As in \cite{audenaert}, 
we now invoke the fundamental double inequality 
\begin{equation}\label{eq:ky fan}
	\tfrac12\,\|p^\down-q^\down\|_1=:\ep^{\down\down}
	\le \ep=T(\rho,\si)\le\ep^{\up\down}:=\tfrac12\,\|p^\up-q^\down\|_1,
\end{equation}
where $\|\cdot\|_1$ is the $\ell^1$ norm on $\R^d$, 
$p^\up$ is the vector of the eigenvalues $p_1,\dots,p_d$ of $\rho$ sorted in the ascending order, 
and 
$q^\down$ is the vector of the eigenvalues $q_1,\dots,q_d$ of $\si$ sorted in the descending order. 
The double inequality \eqref{eq:ky fan} is a special, trace-norm case of the corresponding more general result for unitarily invariant norms; see e.g.\ the double inequality \cite[(IV.62)]{bhatia}. 

Note that for $\ep\in(0,1)$ the 
derivative in $\ep$ of $\De_{f;d}(\ep)$ is $f'(1-\ep)-f'(\frac\ep{d-1})\ge0$ if $1-\ep\ge\frac\ep{d-1}$, that is, if $\ep\le1-\frac1d$. So, $\De_{f;d}(\ep)$ is nondecreasing in $\ep\in[0,1-\frac1d]$. Similarly, $\De_{f;d}(\ep)$ is nonincreasing in $\ep\in[1-\frac1d,1]$. 
Now it follows by \eqref{eq:ky fan} that 
\begin{equation}\label{eq:incr-decr}
	\De_{f;d}(\ep)\ge\min\big(\De_{f;d}(\ep^{\down\down}),\De_{f;d}(\ep^{\up\down})\big). 
\end{equation}
(The argument presented in this paragraph is missing in \cite{audenaert}.) 

\begin{framed}
\noindent At this point, let us ``forget'' the definition of $\ep$ in \eqref{eq:ep} and, instead,
take any $\ep\in(0,1)$. 
\end{framed} 
In view of \eqref{eq:incr-decr}, to prove inequality \eqref{eq:}, it is enough to show that 
\begin{equation}\label{eq:sup}
	D_f(p,q)\overset{\text{(?)}}\le\De_{f;d}(\ep)
\end{equation}
for all $(p,q)\in P_{d;\ep}$, 
where 
\begin{align}
\label{eq:de}
	D_f(p,q)&:=\sum_1^d f(p_i)-\sum_1^d f(q_i),  \\
	P_{d;\ep}&:=\{(p,q)\in P_d\colon E(p,q)=\ep\}, \notag \\
	P_d&:=S_d\times S_d, \notag \\
	S_d&:=\Big\{p=(p_1,\dots,p_d)\in[0,\infty)^d\colon\sum_1^d p_i=1\Big\}, \notag \\ 
		E(p,q)&:=\frac12\sum_1^d|p_i-q_i|.  \notag
\end{align}

In the case $d=2$, take any
$(p,q)\in P_{d;\ep}$, so that $p=(a,1-a)\in S_d$ and $q=(b,1-b)\in S_d$ for some $a$ and $b$ in $[0,1]$ such that $\ep=E(p,q)=|a-b|$. 
Wlog, $a\ge b$, and hence $\ep=E(p,q)=a-b\ge0$. Therefore and because $f$ is convex, 
\begin{align*}
	D_f(p,q)&=[f(a)-f(b)]-[f(1-b)-f(1-a)] \\ 
	&=[f(a)-f(a-\ep)]-[f(1-a+\ep)-f(1-a)] \\ 
	&\le[f(1)-f(1-\ep)]-[f(\ep)-f(0)]=\De_{f;d}(\ep).
\end{align*}
Thus, in the case $d=2$, \eqref{eq:sup} holds and hence \eqref{eq:} holds. 
This establishes the basis of the induction mentioned in the very beginning of the proof of Theorem~\ref{th:}. 

Assume now that $d\ge3$. 
By continuity and compactness, 
there is a maximizer $(p,q)\in P_{d;\ep}$ of $D_f$. 

\begin{framed}
\noindent In what follows, it is assumed by default that $(p,q)\in P_{d;\ep}$ is such a maximizer; \break 
in particular, we have $E(p,q)=\ep$. 
\end{framed}


Wlog, for some $k\in[d]$ we have 
\begin{gather}
	\label{eq:p ge q}
\text{$p_i\ge q_i$ for $i=1,\dots,k$,}	\\ 
\label{eq:p le q}
\text{$p_i\le q_i$ for $i=k+1,\dots,d$,}	\\ 
\label{eq:q decr}
q_1\ge\cdots\ge q_k. 
\end{gather}
So, 
\begin{equation*}
	\ep=\sum_{i=1}^k(p_i-q_i)=\sum_{i=k+1}^d(q_i-p_i). 
\end{equation*} 

Let now 
\begin{equation}\label{eq:p_i^*}
\text{$p_1^*:=q_1+\ep$ and $p_i^*:=q_i$ for $i=2,\dots,k$. 
}	
\end{equation}
Then the vector $(p_1^*,\dots,p_k^*)$ majorizes (in the Schur sense) the vector $(p_1,\dots,p_k)$. 

To see why this is true,  
note first that, by \eqref{eq:p_i^*}, \eqref{eq:q decr}, and \eqref{eq:p ge q}, 
$p^*_{k:1}=p^*_1=q_1+\ep$ and $p^*_{k:i}=q_{k:i}=q_i\le p_i$ for $i=2,\dots,k$. 
%
Moreover, by \eqref{eq:order} and \eqref{eq:p ge q}, 
$q_{k:i}\le p_{k:i}$ for all $i\in[k]$. So, 
\begin{equation}\label{eq:tails}
	\text{$\sum_{i=j+1}^k p^*_{k:i}=\sum_{i=j+1}^k q_{k:i}\le\sum_{i=j+1}^k p_{k:i}$ for all $j\in[k]$.}
\end{equation}
Also, 
\begin{equation}\label{eq:eq}
\sum_{i=1}^k p^*_{k:i}=\sum_{i=1}^k p^*_i=q_1+\ep+\sum_{i=2}^k q_i=\sum_{i=1}^k p_i. 	
\end{equation}
By \eqref{eq:eq} and \eqref{eq:tails},  
$\sum_{i=1}^k p^*_{k:i}=\sum_{i=1}^k p_i$ and $\sum_{i=1}^j p^*_{k:i}\ge\sum_{i=1}^j p_{k:i}$ for all $j\in[k]$. Thus, indeed $(p_1^*,\dots,p_k^*)\succcurlyeq(p_1,\dots,p_k)$. 

Therefore and because $f$ is continuous and convex, we have $\sum_1^k f(p^*_i)\ge\sum_1^k f(p_i)$, in view of the equivalence of items (i) and (iv) in \cite[Theorem~A.3]{marsh-ol09}. 
Also, by \eqref{eq:p_i^*}, $p_i^*\ge q_i$ for all $i\in[k]$. So, if we replace $p_1,\dots,p_k$ respectively by $p^*_1,\dots,p^*_k$, then condition \eqref{eq:p ge q} will continue to hold, as well as the other conditions imposed above on $p=(p_1,\dots,p_d)$, whereas the value of 
$D_f$, as defined in \eqref{eq:de}, may only increase after this replacement. 
 
So, wlog $(p_1,\dots,p_k)=(p_1^*,\dots,p_k^*)$. 
Then, by  
\eqref{eq:p_i^*} and \eqref{eq:p le q}, 
%
\begin{equation}\label{eq:break after 1}
	\text{$p_1
	>q_1$ and $p_i\le q_i$ for all $i=2,\dots,d$, }
\end{equation}
and hence
\begin{equation}\label{eq:ep=}
	\ep=p_1-q_1=\sum_2^d(q_i-p_i). 
\end{equation}


Take any permutation $\pi\colon[d]\to[d]$ such that $\pi(1)=1$. Let $p_\pi:=(p_{\pi(1)},\dots,p_{\pi(d)})$ and similarly define $q_\pi$. Then clearly $E(p_\pi,q_\pi)=E(p,q)$, $D_f(p_\pi,q_\pi)=D_f(p,q)$, and condition \eqref{eq:break after 1} holds with $(p_\pi,q_\pi)$ in place of $(p,q)$ whenever it holds for $(p,q)$. Let us refer to this as the permutation invariance (of $E$, $D_f$, and \eqref{eq:break after 1}).

Suppose now for a moment that $p_i=q_i$ for some $i\in[d]$; by 
the permutation invariance, 
wlog $i=d$, so that $p_d=q_d=:c
$. 
Let $\rho$ and $\si$ be two positive-semidefinite Hermitian linear operators acting on a Hilbert space of the finite dimension $d-1\ge2$, with the same eigenbasis and with the eigenvalues $p_1,\dots,p_{d-1}$ for $\rho$ and $q_1,\dots,q_{d-1}$ for $\si$. Then $\tr\rho=\tr\si=1-c$ and $T(\rho,\si)=p_1-q_1=\sum_2^{d-1}(q_i-p_i)=\sum_2^d(q_i-p_i)=\ep$, with $\ep$ as in \eqref{eq:ep=}. So, by Corollary~\ref{cor:t} (which is, specifically, a corollary of inequality \eqref{eq:}) and the induction mentioned in the beginning of the proof of Theorem~\ref{th:}, 
\begin{multline*}
|D_f(p,q)|=\Big|\sum_1^d f(p_i)-\sum_1^d f(q_i)\Big|
=\Big|\sum_1^{d-1} f(p_i)-\sum_1^{d-1} f(q_i)\Big| \\ 
=|S_f(\rho)-S_f(\si)|
\le\De_{1-c;f;d-1}(\ep)\le\De_{1;f;d}(\ep)=\De_{f;d}(\ep). 
\end{multline*} 
So, \eqref{eq:sup} follows if $p_i=q_i$ for some $i\in[d]$. 

Thus, we may and will henceforth assume that all inequalities in \eqref{eq:break after 1} are strict: 
\begin{equation}\label{eq:break after 1,str}
	\text{$p_1>q_1$ and $p_i<q_i$ for all $i\in[d]\setminus\{1\}$. }
\end{equation}

Suppose next that there are two distinct numbers $j$ and $k$ in $[d]\setminus\{1\}$ such that $p_j>0$ and $p_k>0$. By the permutation invariance, 
wlog 
$p_2\ge p_3>0$.
Replace now $p=(p_1,\dots,p_d)$ by $\tp:=(p_1,p_2+t,p_3-t,p_4,\dots,p_d)$, where 
$t>0$ is close enough to $0$ 
-- more specifically, one may take here any $t\in(0,\min[p_3,q_2-p_2])$. 
Then $(\tp,q)\in P_{d}$ and, by the condition $p_i<q_i$ for $i\in[d]\setminus\{1\}$ in \eqref{eq:break after 1,str},  
$E(\tp,q)=E(p,q)=\ep$, so that $(\tp,q)\in P_{d;\ep}$. 
Also, by the strict convexity of $f$, we have $D_f(\tp,q)>D_f(p,q)$, 
which contradicts the assumption that $(p,q)\in P_{d;\ep}$ is a maximizer of $D_f$. 

So, $p_{i_*}>0$ for at most one $i_*\in[d]\setminus\{1\}$, and, by the permutation invariance, 
wlog $i_*=2$, so that 
\begin{equation}\label{eq:p=}
	p=(p_1,p_2,0,\dots,0). 
\end{equation} 

Further, using the convexity of the function $f$ and Jensen's inequality, 
we see that $\sum_{i=3}^d f(q_i)\ge(d-2)f(\tfrac1{d-2}\,\sum_{i=3}^d q_i)$. So, wlog $q_3=\cdots=q_d$. 

Furthermore, if $q_2\ne q_3$, replace $q$ by $\tq:=(q_1,(1-t)q_2+tq_3,(1-t)q_3+tq_2,\break 
q_4,\dots,q_d)$ for a small enough $t\in(0,1)$. Then, in view of the condition $p_i<q_i$ for $i\in[d]\setminus\{1\}$ in \eqref{eq:break after 1,str},  $(p,\tq)\in P_{d;\ep}$, but $D_f(p,\tq)>D_f(p,q)$, 
a contradiction. 

Thus, $q_2=q_3=\cdots=q_d$, so that, in view of \eqref{eq:p=} and \eqref{eq:ep=}, for some $x\in[0,1]$ 
we have 
\begin{equation}\label{eq:PQ}
	p=P(x):=(x,1-x,0,\dots,0),\quad q=Q(x):=\Big(x-\ep,\frac{1-x+\ep}{d-1},\dots,\frac{1-x+\ep}{d-1}\Big).
\end{equation}
Moreover, 
condition $p_2<q_2$ in \eqref{eq:break after 1,str} can now be rewritten as  
\begin{equation}\label{eq:conds}
	\frac{1-x+\ep}{d-1}>1-x 
\end{equation}
and inequality \eqref{eq:sup} can be rewritten as 
\begin{equation}
h(x):=f(x)-f(x-\ep)+f(1-x)-(d-1)f\Big(\frac{1-x+\ep}{d-1}\Big)\overset{\text{(?)}}\le h(1), 
\end{equation}
which follows because
\begin{equation}
h'(u):=[f'(u)-f'(u-\ep)]+\Big[f'\Big(\frac{1-u+\ep}{d-1}\Big)-f'(1-u)\Big]\ge0 
\end{equation}
for $u\in[x,1]$, 
in view of the convexity of $f$ and condition \eqref{eq:conds} (which implies $\frac{1-u+\ep}{d-1}>1-u$ for all $u\in[x,1]$). 


This 
completes the proof of the inequality in \eqref{eq:}. 

To complete the entire proof of Theorem~\ref{th:}, it remains to note that 
the bound $\De_{f;d}(\ep)$ on $|S_f(\rho)-S_f(\si)|$ is attained, for each $\ep\in[0,1]$, when the density matrices $\rho$ and $\si$ have a common eigenbasis with respective $d$-tuples of eigenvalues $p=(p_1,\dots,p_d)=P(1)$ and $q=(q_1,\dots,q_d)=Q(1)$ with $P$ and $Q$ as defined in 
\eqref
{eq:PQ}. 
\end{proof}

\begin{proof}[Proof of Corollary~\ref{cor:}]
This follows immediately from Theorem~\ref{th:} and the observation, made in the paragraph containing inequality \eqref{eq:incr-decr}, that $\De_{f;d}(\ep)$ is nondecreasing in $\ep\in[0,1-\frac1d]$ and nonincreasing in $\ep\in[1-\frac1d,1]$.  
\end{proof}

{\bf Acknowledgment:} The author is grateful to the referee for a thorough reading of the paper, discovering an error in the first version of the proof of Theorem~\ref{th:}, and suggesting a significant simplification in the second version of the proof. 

{\bf Acknowledgment:} After this paper was accepted for publication and posted on arXiv,  
the author was informed by N.\ Datta about the paper  
\cite{hanson-datta2019}, where results more general than Theorem~\ref{th:} and Corollary~\ref{cor:} were obtained, in a somewhat different form, as well as about the related papers \cite{hanson-datta2017a,datta2017b}. However, the proofs of Theorem~\ref{th:} and Corollary~\ref{cor:} in this note seem to be shorter and more self-contained.






\bibliographystyle{abbrv}


\bibliography{C:/Users/ipinelis/Documents/pCloudSync/mtu_pCloud_02-02-17/bib_files/citations04-02-21}

\def\cprime{$'$} \def\polhk#1{\setbox0=\hbox{#1}{\ooalign{\hidewidth
  \lower1.5ex\hbox{`}\hidewidth\crcr\unhbox0}}}
  \def\polhk#1{\setbox0=\hbox{#1}{\ooalign{\hidewidth
  \lower1.5ex\hbox{`}\hidewidth\crcr\unhbox0}}}
  \def\polhk#1{\setbox0=\hbox{#1}{\ooalign{\hidewidth
  \lower1.5ex\hbox{`}\hidewidth\crcr\unhbox0}}} \def\cprime{$'$}
  \def\polhk#1{\setbox0=\hbox{#1}{\ooalign{\hidewidth
  \lower1.5ex\hbox{`}\hidewidth\crcr\unhbox0}}} \def\cprime{$'$}
  \def\polhk#1{\setbox0=\hbox{#1}{\ooalign{\hidewidth
  \lower1.5ex\hbox{`}\hidewidth\crcr\unhbox0}}} \def\cprime{$'$}
  \def\cprime{$'$}
\begin{thebibliography}{10}

\bibitem{arimoto}
S.~Arimoto.
\newblock Information-theoretical considerations on estimation problems.
\newblock {\em Information and Control}, 19:181--194, 1971.

\bibitem{audenaert}
K.~M.~R. Audenaert.
\newblock A sharp continuity estimate for the von {N}eumann entropy.
\newblock {\em Journal of Physics A: Mathematical and Theoretical},
  40(28):8127--8136, Jun 2007.

\bibitem{bhatia}
R.~Bhatia.
\newblock {\em Matrix analysis}, volume 169 of {\em Graduate Texts in
  Mathematics}.
\newblock Springer-Verlag, New York, 1997.

\bibitem{Blau18}
P.~M. Blau and J.~E. Schwartz.
\newblock {\em Crosscutting social circles : testing a macrostructural theory
  of intergroup relations}.
\newblock Routledge, London, 1st edition, 2018.
\newblock Description based on CIP data; resource not viewed.

\bibitem{fannes73}
M.~Fannes.
\newblock A continuity property of the entropy density for spin lattice
  systems.
\newblock {\em Comm. Math. Phys.}, 31(4):291--294, 1973.

\bibitem{gell-mann--tsallis}
M.~Gell-Mann and C.~Tsallis, editors.
\newblock {\em Nonextensive entropy---interdisciplinary applications}.
\newblock Santa Fe Institute Studies in the Sciences of Complexity. Oxford
  University Press, New York, 2004.

\bibitem{gleason}
A.~Gleason.
\newblock Measures on the closed subspaces of a {H}ilbert space.
\newblock {\em Indiana Univ. Math. J.}, 6:885--893, 1957.

\bibitem{hanson-datta2017a}
E.~P. Hanson and N.~Datta.
\newblock Tight uniform continuity bound for a family of entropies, 2017.
\newblock arXiv:1707.04249 [quant-ph].

\bibitem{datta2017b}
E.~P. Hanson and N.~Datta.
\newblock Maximum and minimum entropy states yielding local continuity bounds.
\newblock {\em J. Math. Phys.}, 59(4):042204, 31, 2018.

\bibitem{hanson-datta2019}
E.~P. Hanson and N.~Datta.
\newblock Universal proofs of entropic continuity bounds via majorization flow,
  2019.
\newblock arXiv:1909.06981 [quant-ph].

\bibitem{jost}
L.~Jost.
\newblock Entropy and diversity.
\newblock {\em Oikos}, 113(2):363--375, 2006.

\bibitem{marsh-ol09}
A.~W. Marshall, I.~Olkin, and B.~C. Arnold.
\newblock {\em Inequalities: theory of majorization and its applications}.
\newblock Springer Series in Statistics. Springer, New York, second edition,
  2009.

\bibitem{nielsen-chuang}
M.~A. Nielsen and I.~L. Chuang.
\newblock {\em Quantum computation and quantum information}.
\newblock Cambridge University Press, Cambridge, 2000.

\bibitem{semi-mod-publ}
I.~Pinelis.
\newblock Generalized semimodularity: Order statistics.
\newblock In {\em High dimensional probability VIII}, volume~74 of {\em
  Progress in Probability}, pages 99--119. Springer, 2019.

\bibitem{sbordoni}
V.~Sbordoni, G.~Allegrucci, and D.~Cesaroni.
\newblock Population structure.
\newblock In W.~B. White and D.~C. Culver, editors, {\em Encyclopedia of Caves
  (Second Edition)}, pages 608 -- 618. Academic Press, Amsterdam, second
  edition edition, 2012.

\bibitem{tsallis88}
C.~Tsallis.
\newblock Possible generalization of {B}oltzmann--{G}ibbs statistics.
\newblock {\em Journal of Statistical Physics}, 52(1):479--487, Jul 1988.

\bibitem{tsallis10}
C.~Tsallis.
\newblock {\em {Introduction to nonextensive statistical mechanics: approaching
  a complex world}}.
\newblock Springer, New York, NY, 2010.

\bibitem{zhang07}
Z.~Zhang.
\newblock Uniform estimates on the {T}sallis entropies.
\newblock {\em Lett. Math. Phys.}, 80(2):171--181, 2007.

\end{thebibliography}

\end{document}